\begin{document}
\newcommand{\HH}[1]{\mathcal{H}_b\left({#1}\right)}
\newcommand{\uzer}{\underline{0}}
\newcommand{\uV}{\underline{V}}
\newcommand{\uA}{\underline{A}}
\newcommand{\uD}{\underline{D}}
\newcommand{\uv}{\underline{v}}
\newcommand{\uT}{\underline{T}}
\newcommand{\ut}{\underline{t}}
\newcommand{\ur}{\underline{r}}
\newcommand{\uR}{\underline{R}}
\newcommand{\uc}{\underline{c}}
\newcommand{\uC}{\underline{C}}
\newcommand{\ul}{\underline{l}}
\newcommand{\uL}{\underline{L}}
\newcommand{\uh}{\underline{h}}
\newcommand{\uH}{\underline{H}}
\newcommand{\ue}{\underline{e}}
\newcommand{\uE}{\underline{E}}
\newcommand{\uG}{\underline{G}}
\newcommand{\ug}{\underline{g}}
\newcommand{\uz}{\underline{z}}
\newcommand{\uZ}{\underline{Z}}
\newcommand{\uu}{\underline{u}}
\newcommand{\uU}{\underline{U}}
\newcommand{\uj}{\underline{j}}
\newcommand{\uJ}{\underline{J}}
\newcommand{\uX}{\underline{X}}
\newcommand{\ux}{\underline{x}}
\newcommand{\uY}{\underline{Y}}
\newcommand{\uy}{\underline{y}}
\newcommand{\uW}{\underline{W}}
\newcommand{\uw}{\underline{w}}
\newcommand{\uth}{\underline{\theta}}
\newcommand{\uTh}{\underline{\Theta}}
\newcommand{\uph}{\underline{\phi}}
\newcommand{\ual}{\underline{\alpha}}
\newcommand{\uxi}{\underline{\xi}}
\newcommand{\us}{\underline{s}}
\newcommand{\uS}{\underline{S}}
\newcommand{\un}{\underline{n}}
\newcommand{\uN}{\underline{N}}
\newcommand{\up}{\underline{p}}
\newcommand{\uq}{\underline{q}}
\newcommand{\uf}{\underline{f}}
\newcommand{\ua}{\underline{a}}
\newcommand{\ub}{\underline{b}}
\newcommand{\uDelta}{\underline{\Delta}}
%  CALLIGRAPHIC
\newcommand{\cA}{{\cal A}}
\newcommand{\cB}{{\cal B}}
\newcommand{\cC}{{\cal C}}
\newcommand{\cc}{{\cal c}}
\newcommand{\cD}{{\cal D}}
\newcommand{\cE}{{\cal E}}
\newcommand{\cI}{{\cal I}}
\newcommand{\cK}{{\cal K}}
\newcommand{\cL}{{\cal L}}
\newcommand{\cN}{{\cal N}}
\newcommand{\cP}{{\cal P}}
\newcommand{\cQ}{{\cal Q}}
\newcommand{\cR}{{\cal R}}
\newcommand{\cS}{{\cal S}}
\newcommand{\cs}{{\cal s}}
\newcommand{\cT}{{\cal T}}
\newcommand{\ct}{{\cal t}}
\newcommand{\cU}{{\cal U}}
\newcommand{\cV}{{\cal V}}
\newcommand{\cW}{{\cal W}}
\newcommand{\cX}{{\cal X}}
\newcommand{\cx}{{\cal x}}
\newcommand{\cY}{{\cal Y}}
\newcommand{\cy}{{\cal y}}
\newcommand{\cZ}{{\cal Z}}
%  TILDE
\newcommand{\tE}{\tilde{E}}
\newcommand{\tZ}{\tilde{Z}}
\newcommand{\tz}{\tilde{z}}
%   HAT
\newcommand{\hU}{\hat{U}}
\newcommand{\hX}{\hat{X}}
\newcommand{\hY}{\hat{Y}}
\newcommand{\hZ}{\hat{Z}}
\newcommand{\huX}{\hat{\uX}}
\newcommand{\huY}{\hat{\uY}}
\newcommand{\huZ}{\hat{\uZ}}
\newcommand{\indp}{\underline{\; \| \;}}
\newcommand{\diag}{\mbox{diag}}
\newcommand{\sumk}{\sum_{k=1}^{K}}
\newcommand{\beq}[1]{\begin{equation}\label{#1}}
\newcommand{\eeq}{\end{equation}}
%%% FROM SHLOMO SHAMAI

\newcommand{\bg}{\mbox{\boldmath \begin{math}g\end{math}}}
\newcommand{\ba}{{\bf a}}
\newcommand{\bb}{{\bf b}}
\newcommand{\bc}{{\bf c}}
\newcommand{\bD}{{\bf D}}
\newcommand{\bbf}{{\bf f}}
\newcommand{\bn}{{\bf n}}
\newcommand{\bs}{{\bf s}}
\newcommand{\bt}{{\bf t}}
\newcommand{\bu}{{\bf u}}
\newcommand{\bv}{{\bf v}}
\newcommand{\bx}{{\bf x}}
\newcommand{\by}{{\bf y}}
\newcommand{\bz}{{\bf z}}
\newcommand{\bC}{{\bf C}}
\newcommand{\bJ}{{\bf J}}
\newcommand{\bN}{{\bf N}}
\newcommand{\bQ}{{\bf Q}}
\newcommand{\bS}{{\bf S}}
\newcommand{\bT}{{\bf T}}
\newcommand{\bV}{{\bf V}}
\newcommand{\bX}{{\bf X}}
\newcommand{\bY}{{\bf Y}}
\newcommand{\bZ}{{\bf Z}}
\newcommand{\oI}{\overline{I}}
\newcommand{\oD}{\overline{D}}
\newcommand{\oh}{\overline{h}}
\newcommand{\oV}{\overline{V}}
\newcommand{\oR}{\overline{R}}
\newcommand{\oH}{\overline{H}}
\newcommand{\ol}{\overline{l}}
\newcommand{\E}{{\cal E}_d}
\newcommand{\el}{\ell}
\newcommand{\modulo}{\: \mathrm{mod} \:}
%%%%%%%%%%%%%%%%%% KESAL

%%%%%%%%%%%%%%%%%%%%%%%%%%%%%%%%%%%%%%%

\newcommand{\InProd}[2]{\langle {#1}\ \mathbf{,} \ 
{#2}\rangle}                     %% Inner Product
\newcommand{\IntLim}[4]{\int_{#1}^{#2} {#3}\, 
\mathrm{d}{#4}}                        %% Integral with limits
%\newcommand{\EXP}[1]{\exp\left({#1}\right)}                                    
%     %% Exponential with parantheses
\newcommand{\EXP}[1]{e^{{#1}}}                                         %% 
%%Exponential with parantheses
\newcommand{\LOG}[1]{\log\left({#1}\right)}                                     
    %% Logarithm with parantheses
\newcommand{\Expect}[1]{\mathrm{E}\left[{#1}\right]}                            
    %% Expected value
\newcommand{\CONV}[2]{\left({#1}\circledast{#2}\right)}                         
    %% Circular Convolution
\newcommand{\NORM}[1]{\left\|{#1}\right\|}                                      
    %% Norm operator
\newcommand{\NORMC}[1]{\left|{#1}\right|}
\newcommand{\SNR}{\mathrm{SNR}}                                                 
    %% SNR font
\newcommand{\PDF}{f}                                                            
 %% Generic pdf
\newcommand{\PDFZ}{\PDF_Z}                                                      
       %% Generic pdf
\newcommand{\PDFtZ}{\PDF_t^{(Z)}}
\newcommand{\DEC}{{Y}}
\newcommand{\CDF}{F}                                                            
 %% Generic cdf
\newcommand{\EE}{d}
\newcommand{\CD}{\PDF'}                                                         
      %% Inverse error variable
\newcommand{\CDG}{\CD^{(\sigma_Z^2)}}                                           
    %% Channel derivative variable generalized
\newcommand{\CDZ}{\CD_Z}
\newcommand{\CN}{\PDF_{\mathrm{n}}}                                             
    %% Channel noise variable
\newcommand{\CNG}{\CN^{(\sigma_Z^2)}}                                           
    %% Channel noise density generalized
\newcommand{\ID}{{\mathcal{I}}_{\Delta}}                                        
    %% Symmetric interval
\newcommand{\DD}{[-\frac{\Delta}{2},\frac{\Delta}{2})}                          
    %% Symmetric interval
%\newcommand{\ID}{{\DD}}                                                        
%          %% Symmetric interval
%\newcommand{\TTT}{{\mathcal{T}}_{P}^{(\Delta)}}                            %% 
%%%Feasible set of transfers
\newcommand{\TTT}{{\mathcal{G}}(\Delta,\SNR)}                            %% 
%%Feasible set of transfers
\newcommand{\SETEE}{{\mathcal{D}}_{\SNR}^{(\Delta)}}
\newcommand{\MOD}[1]{\left\lceil 
{#1}\right\rfloor_{\Delta}}                        %% Modulo-Delta operator
\newcommand{\ENCO}{\Gamma_{t}^{(\textsc{E})}}                                   
    %% Encoder generic
\newcommand{\RPLUS}{\mathbb{R}^{+}_0}                                           
 %% Non-negative reals
\newcommand{\EXPXK}[2]{\EXP{-\frac{\left({#1}-{#2}\Delta\right)^2}{2\sigma_Z^2}}}
\newcommand{\TINVD}{g^{-1}}
\newcommand{\TINV}[1]{\TINVD\!\!\left(#1\right)}
\newcommand{\PRB}[1]{\mathrm{Pr}\left(#1\right)}
\newcommand{\DEL}{\frac{\Delta}{2}}
\newcommand{\DERIV}[1]{\frac{\mathrm{d}}{\mathrm{d}{#1}}}
\newcommand{\PDERIV}[1]{\frac{\partial}{\partial{#1}}}
\newcommand{\SUMINF}[2]{\sum_{#1=-\infty}^{\infty}{#2}}
\newcommand{\CAP}{\mathrm{C}}
\newcommand{\MUTUAL}[2]{\mathrm{I}\left({#1};{#2}\right)}
\newcommand{\ENTROPY}[1]{\mathrm{h}\left({#1}\right)}
\newcommand{\PROB}[1]{\mathrm{Pr}\left({#1}\right)}
\newcommand{\INTD}[2]{\int_{\ID} {#1}\, \mathrm{d}{#2}} %% Integral with 
%%Delta/2 limits

\newcommand{\FT}[1]{\widetilde{#1}}
\newcommand{\FC}[1]{\widetilde{#1}}
\newcommand{\LEB}{{\mathcal{L}}^2(\ID)}
\newcommand{\EXPK}[1]{e^{ \jmath 2\pi {#1} \frac{x}{\Delta}}}
\newcommand{\EQUP}[1]{\stackrel{\text{(#1)}}{=}}
%%%%%%%%%%%%%%%%%%%%%%%%%%%%%%%%%%%%%%

\newtheoremstyle{normal}% hnamei
{3pt}% hSpace abovei
{3pt}% hSpace belowi
{}% hBody fonti
{}% hIndent amounti1
{\itshape}% hTheorem head fonti
{:}% hPunctuation after theorem headi
{.5em}% hSpace after theorem headi2
{}% h
\theoremstyle{normal}
\newtheorem{proposition}{Proposition}
\newtheorem*{theorem_b}{Theorem~1$^\prime$}
\newtheorem{fact}{Fact}
\newtheorem{lemma}{Lemma}
\newtheorem{definition}{Definition}
\newtheorem{corollary}{Corollary}
\newtheorem{theorem}{Theorem}
\newtheorem{remark}{Remark}
\newtheorem{note}{Note}
\newtheorem{assumption}{Assumption}
\newtheorem{notation}{Notation}
\newtheorem{example}{Example}

\newenvironment{correct}%
{\noindent\ignorespaces}%
{\par\noindent%
\ignorespacesafterend}
%%%%%%%%%%%%%%%%%%%%%%%%%%%%%%%%%%%%%%%%%%%%%%%KESALEND

% paper title
\title{Generalized Bounds on the Capacity of the Binary-Input Channels}

\author{Mustafa~Kesal} %, and Uri~Erez,~\IEEEmembership{Member,~IEEE}
%\thanks{This work was supported in part by the Israel Science Foundation, 
%%%grant 
%ISF1234/09 and  by the Binational Science Foundation, grant
%BSF2008455.    This work was presented in part at the
%IEEE International Symposium on Information Theory, Austin, TX,
%June 13-18, 2010.}
%\thanks{M.~Kesal and U.~Erez are with the Department of
%Electrical Engineering - Systems, Tel Aviv University, Ramat Aviv,
%69978, Israel (Email: \{kesal,uri\}@eng.tau.ac.il).}
%}

\author{\normalsize Mustafa Kesal \\
%\small EECS, MIT \\[-5pt]
%\small Texas A\&M University \\[-5pt]
%\small College Station, TX 77840-3128 \\[-5pt]
\small mkesal@aselsan.com.tr
%\and
%\normalsize Uri Erez \\
%\small Dept.  EE - Systems \\ Tel Aviv University \\
%\small uri@eng.tau.ac.il
}
%
%\author{
%\authorblockN{Mustafa Kesal}
%\authorblockA{
%%EE - Systems Dprt. \\
%%Tel Aviv University\\
%%Tel Aviv, Israel \\
%Email: mkesal@gmail.com }
%\and
%\authorblockN{Uri Erez}
%\authorblockA{Dept. EE - Systems  \\
%Tel Aviv University\\
%Tel Aviv, Israel \\
%Email: uri@eng.tau.ac.il }
%}
% avoiding spaces at the end of the author lines is not a problem with
% conference papers because we don't use \thanks or \IEEEmembership
% for over three affiliations, or if they all won't fit within the width
% of the page, use this alternative format:
%
%\author{\authorblockN{Michael Shell\authorrefmark{1},
%Homer Simpson\authorrefmark{2},
%James Kirk\authorrefmark{3},
%Montgomery Scott\authorrefmark{3} and
%Eldon Tyrell\authorrefmark{4}}
%\authorblockA{\authorrefmark{1}School of Electrical and Computer Engineering\\
%Georgia Institute of Technology,
%Atlanta, Georgia 30332--0250\\ Email: mshell@ece.gatech.edu}
%\authorblockA{\authorrefmark{2}Twentieth Century Fox, Springfield, USA\\
%Email: homer@thesimpsons.com}
%\authorblockA{\authorrefmark{3}Starfleet Academy, San Francisco, California 
%96678-2391\\
%Telephone: (800) 555--1212, Fax: (888) 555--1212}
%\authorblockA{\authorrefmark{4}Tyrell Inc., 123 Replicant Street, Los Angeles, 
%California 90210--4321}}

% make the title area
\maketitle

%%%%%%%%%%%%%%%%%%%%%%%%%%%%%%%%%%%%%%%%%%%%%%%%%%%%%%%%%%%%%%%%%%%%%%%
%%%%%%%%%%%%%%%%%%%%%%%%      Abstract      %%%%%%%%%%%%%%%%%%%%%%%%%%%
%%%%%%%%%%%%%%%%%%%%%%%%%%%%%%%%%%%%%%%%%%%%%%%%%%%%%%%%%%%%%%%%%%%%%%%

\begin{abstract}
For the class of the memoryless binary-input channels which are {\em{not}} 
necessarily 
symmetric, we derive tight bounds on the 
capacity in terms of the Bhattacharyya parameter. As it turns out, the 
bounds derived 
under the {\em{symmetric}} channel assumption in \cite{fabregas} are valid for 
the general case as well.
\end{abstract}

{\em {\textbf{KeyWords:} {Polar Codes, symmetric capacity, binary-input 
channels, Bhattacharyya 
parameter.}}}

%%%%%%%%%%%%%%%%%%%%%%%%%%%%%%%%%%%%%%%%%%%%%%%%%%%%%%%%%%%%%%%%%%%%%%%
%%%%%%%%%%%%%%%%%%%%%%%%      Introduction  %%%%%%%%%%%%%%%%%%%%%%%%%%%
%%%%%%%%%%%%%%%%%%%%%%%%%%%%%%%%%%%%%%%%%%%%%%%%%%%%%%%%%%%%%%%%%%%%%%%
\section{Introduction}
In \cite{arikan}, Arikan introduced the revolutionary concept of channel 
polarization. As 
a tool to prove the capacity achieving property of the polar codes, the bounds 
(\ref{ineq:BoundArikan}) which relates the 
{\em{symmetric capacity}} to the {\em{Bhattacharyya parameter}} are 
utilized. In \cite{fabregas}, these bounds are improved under the memoryless 
{\em{symmetric}} binary-input channels context.  Specifically, the bounds 
presented in \cite{arikan} 
\begin{align}{\label{ineq:BoundArikan}}
\log_2\left(\frac{2}{1+Z(W)}\right)\leq I(W)\leq\sqrt{1-Z^2(W)}
\end{align}
are improved in \cite{fabregas} to
\begin{align}
(1-Z(W))\leq I(W)\leq 
1-\mathcal{H}_b\left(\frac{1-\sqrt{1-Z^2(W)}}{2}\right)
\end{align}
The bounds  are compared in Figures {\ref{Fig:ImprovementLowerBound}} and 
{\ref{Fig:ImprovementUpperBound}}.

{\begin{figure}[ht]
  \begin{center}
  \footnotesize
  \includegraphics[scale=.5]{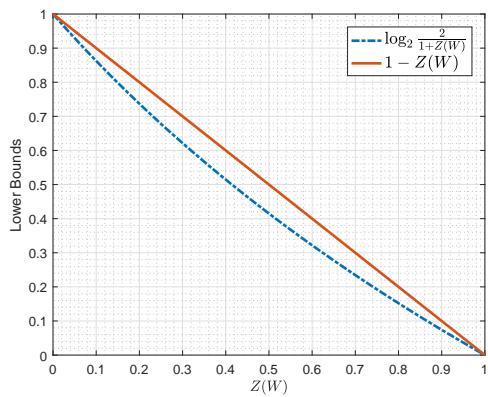}
   \caption{Lower bound provided in \cite{arikan} and the improved version.}
     \label{Fig:ImprovementLowerBound}
  \end{center}
\end{figure}}

{\begin{figure}[ht]
  \begin{center}
  \footnotesize
  \includegraphics[scale=.5]{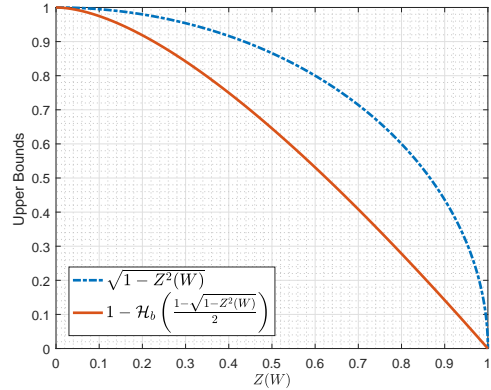}
   \caption{Upper bound provided in \cite{arikan} and the improved version.}
     \label{Fig:ImprovementUpperBound}
  \end{center}
\end{figure}}

In this paper, we derive capacity bounds for the binary-input memoryless 
channels that 
are not necessarily symmetric. As it turns out, the capacity of an arbitrary 
binary-input memoryless channel $W$, which is achieved by a 
non-uniform input distribution in general, lies within the bounds presented in 
\cite{fabregas} as well. 

Since the central idea utilized in \cite{fabregas} 
technique hinges on the separability of the symmetric channels, our method of 
proof considerably differs from the mentioned work. For the symmetric channels, 
as the uniform input 
density achieves the capacity, the symmetric capacity is {\em{indeed}} the 
channel capacity in the usual sense. Since  we do not 
assume any symmetric structure on the channel in our treatment, the symmetric 
capacity is not the right metric to work on. Moreover, the derivations in 
\cite{fabregas} are 
built around the idea that every binary-input memoryless {\em{symmetric}} 
channel admits a decomposition into subchannels that are binary symmetric 
channels. Once the symmetric channel assumption is abandoned, the essential 
ingredient of the proof (the subchannel 
decomposition) is not available anymore. In other words, although the results 
of \cite{fabregas} extends to the {\em{asymmetric}} channel models, the method 
does not.

The paper is organized as follows. In Section~\ref{sec:Notation} we present the 
notation followed in the paper along with the definitions 
of the parameters of interest. The lower bound derivation is presented in 
Section~\ref{sec:LowBound}, and we conclude our treatment with the upper bound 
result developed in Section~\ref{sec:UpBound}.

\section{Notation}{\label{sec:Notation}}

We consider binary input $N$-ary output discrete  channels, which we denote 
as $W$. The transition 
probabilities are $N$-ary vectors $\mathbf{P}$ and $\mathbf{Q}$ whose 
components are defined as
\begin{align}
p_n&=W(Y=n\mid X=0){\label{eq:Pn}}\\
q_n&=W(Y=n\mid X=1){\label{eq:Qn}}.
\end{align}
Since they are probability mass functions, they satisfy $\sum_{n} 
p_n=\sum_{n}q_n=1$. In  The source is 
assumed to 
produce bits independently and identically with $\mathrm{Pr}(X_n=0)=\alpha$, 
for all $n\in\mathbb{N}$.

Since our treatment involves differential calculus tools, 
we prefer to use natural logarithms, i.e., of base $e$ unless stated otherwise. 
We state the main results in {\em{bits}} whereas we switch to nats during 
the 
derivations. Therefore, for example, the binary entropy term appearing in 
(\ref{eq:MainLowerEquation}) is in terms of bits, although the 
definition (\ref{eq:BinEnt}) is given in nats. 

The {\em{binary entropy function}} and the {\em{binary Bhattacharyya 
parameter}} are defined as
\begin{align}
\mathcal{H}_b(p)&\triangleq -p\log(p)-(1-p)\log(1-p){\label{eq:BinEnt}}\\
\mathcal{Z}_b(p)&\triangleq \sqrt{4p(1-p)},
\end{align}
respectively.

The objective of this paper is to derive strong relationships between two 
parameters of interest, namely the {\em{channel capacity}} and the 
{\em{Bhattcharyya parameter}}. 

\begin{definition}[Channel Capacity] The {\em{capacity}} of binary-input 
memoryless channel $W$ is defined as
\begin{align*}
C(W)\triangleq \max_{\alpha\in[0,1]} I(W;\alpha),
\end{align*}
where
\begin{align*}
	I(W;\alpha)\triangleq \mathrm{H}(X)-\mathrm{H}(X\mid Y).
			\end{align*}
	Here the input random variable $X$ is Bernoulli distributed with 
	$\mathrm{Pr}(X=0)=\alpha$, and $\mathrm{H}(\cdot)$ is the entropy 
	functional.
	\end{definition}
Simple algebraic manipulations yield the following explicit form 
\begin{align}{\label{eq:altform}}
I(W;\alpha)=\mathcal{H}_b(\alpha)-\mathrm{E}_{\mathbf{R}}\left[\mathcal{H}_b\left(\frac{\alpha
 p_n}{r_n}\right)\right]
\end{align}
Here, the expectation is evaluated over the convex combination of the densities 
$\mathbf{R}\triangleq \alpha\mathbf{P}+(1-\alpha)\mathbf{Q}$.
Note that, for $\alpha=1/2$, the term $I(W;1/2)$ coincides with the {\em{ 
symmetric capacity}} 
as defined in ~\cite{arikan}. 
We conclude this section with the definition of a metric which is associated 
with the channel reliability.
\begin{definition}[Bhattacharyya Parameter] The {\em{Bhattacharyya parameter}} 
of the channel $W$ is defined as
	\begin{align*}
	Z(W)\triangleq \sum_{y_0=1}^N\sqrt{W(y_0\mid 0) W(y_0\mid 1)}.
	\end{align*}
\end{definition}

In the next section, we generalize the lower bound on the channel capacity as 
given in \cite{fabregas}.
%
%These parameters have specific forms for the {\em {binary output}} channels as 
%defined next.
%\begin{definition}[Binary Entropy Function] The {\em{binary entropy function}} 
%is defined as $\mathcal{H}_b(p)\triangleq -p\log p-(1-p)\log(1-p)$ for all 
%$p\in[0,1]$.
%\end{definition}
%and likewise
%\begin{definition}[Binary Bhattacharyya Parameter] The {\em{binary 
%Bhattacharyya 
%parameter}} 
%is defined as $\mathcal{Z}_b(p)\triangleq \sqrt{4p(1-p)}$ for all 
%$p\in[0,1]$.
%\end{definition}
%
%The symmetric capacity has the following equivalent expression which we prefer 
%to work on 
%\begin{align}{\label{eq:EquivExpression}}
%I(W)=\log 2-\sum_{n=1}^N 
%\frac{(p_n+q_n)}{2}\mathcal{H}_b\left(\frac{p_n}{p_n+q_n}\right).
%\end{align}

\section{Generalized Lower Bound}{\label{sec:LowBound}}
Our main approach is based on providing relationships between the binary 
parameters and then extending them to the $N$-point case. 
\begin{theorem}\label{thm:HUpperBound}
For any $p\in[0,1]$
\begin{align}{\label{eq:MainLowerEquation}}
\HH{p}\leq \mathcal{Z}_b(p)\quad [\mathrm{bits}].
\end{align}
\end{theorem}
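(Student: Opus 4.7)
The plan is to reduce the inequality to a one-dimensional monotonicity question via a log-likelihood substitution. Both $\HH{p}$ and $\mathcal{Z}_b(p)$ are symmetric about $p=1/2$ (using $\HH{p}=\HH{1-p}$ and $\mathcal{Z}_b(p)=\mathcal{Z}_b(1-p)$), so it suffices to verify the inequality on $[0,1/2]$. I would set $g(p):=\mathcal{Z}_b(p)-\HH{p}$ with $\HH{\cdot}$ expressed in bits; the boundary values $g(0)=0$ and $g(1/2)=1-1=0$ are immediate, so everything reduces to showing $g\geq 0$ on the open interval $(0,1/2)$ via a shape analysis of $g$.

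Differentiation gives $g'(p)=\frac{1-2p}{\sqrt{p(1-p)}}-\log_2\frac{1-p}{p}$, an uncooperative mix of an algebraic term and a logarithm. The key simplification is the change of variable $q:=\log_2\frac{1-p}{p}$, which is a strictly decreasing bijection from $(0,1/2)$ onto $(0,\infty)$. Using $p=1/(1+2^q)$, a short algebraic reduction folds both pieces of $g'$ into a single elementary function of $2^{q/2}$, yielding $g'(p)=s(q)$ where $s(q):=2^{q/2}-2^{-q/2}-q$. Thus the sign of $g'$ on $(0,1/2)$ is precisely the sign of the one-dimensional function $s$ on $(0,\infty)$.

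The function $s$ is transparent: $s(0)=0$, $s'(0)=\ln 2-1<0$, and $s''(q)=\frac{(\ln 2)^2}{4}(2^{q/2}-2^{-q/2})>0$ for $q>0$. Hence $s'$ is strictly increasing from a negative value up to $+\infty$, which forces $s$ first to dip below zero and then to rise back through it, crossing zero at a unique point $q^{**}>0$. Translating back, $g'>0$ on $(0,p^{**})$ and $g'<0$ on $(p^{**},1/2)$, where $p^{**}:=1/(1+2^{q^{**}})$; so $g$ rises from $0$ at $p=0$ to a single interior maximum at $p^{**}$ and then falls back to $0$ at $p=1/2$. In particular $g\geq 0$ on $[0,1/2]$, and the full inequality follows by symmetry.

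The main obstacle I anticipate is precisely this sign analysis of $g'$. Because the target inequality is tight at $p\in\{0,1/2,1\}$, no constant can be lost, and naive single-term bounds on the logarithm (for instance $-\ln p\leq (1-p)/\sqrt{p}$, which is correct but too weak) recover the right shape but fail by a constant near $p=1/2$. The log-likelihood substitution is what saves the analysis by compressing the awkward logarithmic and square-root pieces of $g'$ into the explicit function $s(q)=2^{q/2}-2^{-q/2}-q$, whose sign pattern can then be read off directly.
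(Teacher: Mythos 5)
Your proof is correct: the computation $g'(p)=\frac{1-2p}{\sqrt{p(1-p)}}-\log_2\frac{1-p}{p}$ is right, the substitution $q=\log_2\frac{1-p}{p}$ does collapse it to $s(q)=2^{q/2}-2^{-q/2}-q$ (since $1-2p=\frac{2^q-1}{1+2^q}$ and $\sqrt{p(1-p)}=\frac{2^{q/2}}{1+2^q}$), and the convexity argument $s(0)=0$, $s'(0)=\ln 2-1<0$, $s''>0$ on $(0,\infty)$ correctly yields the single sign change of $s$, hence the unimodal shape of $g=\mathcal{Z}_b-\mathcal{H}_b$ on $[0,1/2]$ with zero endpoint values, which gives $g\geq 0$; symmetry finishes the job. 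However, this is a genuinely different route from the paper's. The paper does not study the difference $\mathcal{Z}_b-\mathcal{H}_b$ at all: it first homogenizes the inequality into the equivalent two-variable form $(x+y)\,\mathcal{H}_b\bigl(\tfrac{x}{x+y}\bigr)\leq (2\log 2)\sqrt{xy}$, fixes the product $\alpha_0=xy$, and shows that $f(x)=\bigl(x+\tfrac{\alpha_0}{x}\bigr)\mathcal{H}_b\bigl(\tfrac{x}{x+\alpha_0/x}\bigr)$ is nondecreasing up to $x=\sqrt{\alpha_0}$; its derivative reduces to the sign of $(1-t)\log(1-t)-t\log t$ on $(0,1/2]$, handled by concavity and endpoint values. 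The payoff of the paper's homogenization is that the relevant one-variable function is monotone, so only one endpoint matters; your direct approach accepts a non-monotone $g$ (its derivative changes sign at $p^{**}$) but compensates by checking both endpoints $g(0)=g(1/2)=0$, and the log-likelihood substitution $2^{q/2}$ (the square root of the likelihood ratio) is what makes the sign analysis of $g'$ tractable where naive term-by-term bounds fail, precisely because the inequality is tight at $p\in\{0,1/2,1\}$. Both arguments are elementary differential calculus of comparable length; yours avoids the slightly artificial two-variable reformulation (and the paper's superfluous side condition $xy\leq 1$), while the paper's avoids having to locate and reason about an interior extremum.
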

\begin{proof}[Proof of Theorem \ref{thm:HUpperBound}] Rather than proving the 
statement directly, we prefer to work on  an equivalent problem and then reach 
the desired conclusion. Formally speaking, we claim, for any positive numbers 
$x$ 
and $y$ 
such that $x y\leq 1$, the following property holds  
\begin{align}{\label{eq:EquivStatement}}
(x+y)\HH{\frac{x}{x+y}}\leq (2\log 2)\sqrt{xy}.
\end{align}
The equivalence of (\ref{eq:MainLowerEquation}) and (\ref{eq:EquivStatement}) 
is revealed once the correspondence $p\to \frac{x}{x+y}$ is 
substituted in (\ref{eq:EquivStatement}).

As for the equivalent setting, let $\alpha_0\triangleq x y$ and without 
loss of 
generality, let $x\leq y$. Then, the task reduces to show that the maximum 
value of the function 
\begin{align*}
f(x)=\left(x+\frac{\alpha_0}{x}\right)\HH{\frac{x}{x+\frac{\alpha_0}{x}}}
\end{align*}
over the variable $x\in(0,\sqrt{\alpha_0}]$ is  $(2\log 
2)\sqrt{\alpha_0}$. We show that the 
function $f$ is monotone non-decreasing over the domain of interest and hence 
the maximum is attained at the end point $x^\star=\sqrt{\alpha_0}$. Since the 
function is continuous and well-behaving, we use the standard differential 
calculus approach, i.e., we show that the first order derivative of $f$ is 
non-negative over the interval of interest. The calculations are simple but to 
keep the analysis tractable, we  define the auxiliary parameters 
$u\triangleq x+ 
\frac{\alpha_0}{x}$ and $t\triangleq 
\frac{x}{u}=\frac{x^2}{x^2+\alpha_0}$. Both parameters are implicit functions 
of $x$ and the first order derivative $u^\prime$ is to be understood  as
$\frac{\mathrm{d} u}{\mathrm{d}x}$, which, incidentally, satisfies $u^\prime= 
(2t-1)/t.$
% Before we proceed, we 
%note the following  
%easy to verify fact
%\begin{align}
%u^\prime=\frac{2t-1}{t}.
%\end{align}
Since $f(x)=u\,\HH{\frac{x}{u}}$, the first order derivative is given by
\begin{align*}
f^\prime(x)=u^\prime \HH{\frac{x}{u}}+u \left(\frac{x}{u}\right)^\prime 
\log\left(\frac{u-x}{x}\right).
\end{align*}
Here, we used the chain rule for derivatives and the fact  that the first order 
derivative of the binary entropy function with
respect to its argument is given by $\HH{x}^\prime=\log(\frac{1-x}{x})$. 
Substituting the expressions given for the auxiliary parameters and applying  
basic algebraic manipulations yield \mbox{$f^\prime(x)=((1-t)\log(1-t)-t\log 
t)/{t}.$}
%\begin{align}
%f^\prime(x)=\left(\frac{2t-1}{t}\right)\HH{t}+(2-2t)\log\left(\frac{1-t}{t}\right),
%\end{align}
%which, after simple algebraic manipulations reduces to 
%$f^\prime(x)=((1-t)\log(1-t)-t\log t)/{t}.$
%\begin{align}
%f^\prime(x)=\frac{(1-t)\log(1-t)-t\log t}{t}.
%\end{align}
We need to show that the function $g(t)\triangleq (1-t)\log(1-t)-t\log t$ is 
positive over the interval $t\in(0,\frac{1}{2}]$. The restriction on $t$ is 
imposed by definition and it can be deduced from the monotone behavior of the 
function $h(x)=\frac{x^2}{x^2+\alpha_0}$. As for the function $g(t)$, we see 
that $g(0)=g(1/2)=0$. Therefore, it is sufficient to show that $g(t)$ is 
{\em{concave}} on $(0,1/2]$ and thus stays positive over the open interval 
$(0,1/2)$. To show that the function $g$ is indeed concave, we inspect the   
second order derivative $g^{\prime\prime}(t)=(2t-1)/(t-t^2)$.
Since $t\in(0,\frac{1}{2}]$, we see that $g^{\prime\prime}(t)<0$ over this 
region and hence $g(t)$ is a concave function with value $0$ at its end 
points, which proves the fact that $g(t)$ is non-negative over the interval 
$(0,1/2]$, so is $f^\prime(x)$. Finally, since $f(x)$ is an non-decreasing 
function over $x\in(0,\sqrt{\alpha_0}]$, we have $f(x)\leq 
f(\sqrt{\alpha_0})=(2\log 2)\sqrt{xy}$, as claimed.
\end{proof}

The lower bound on the channel capacity is stated as corallary to 
Theorem~\ref{thm:HUpperBound}.

\begin{corollary}{\label{corr:LowerBound}} The capacity of the channel $W$ is 
bounded below as
\begin{align}{\label{ineq:LowBd}}
C(W)\geq 1-Z(W)\quad [\mathrm{bits}].
\end{align}
\end{corollary}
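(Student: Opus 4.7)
The plan is to lower-bound $C(W)$ by the mutual information obtained at the uniform input $\alpha=1/2$, rewrite that quantity via the alternative form (\ref{eq:altform}), and then apply Theorem~\ref{thm:HUpperBound} termwise to recognize the Bhattacharyya parameter.

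First I would note that $C(W)=\max_{\alpha\in[0,1]} I(W;\alpha) \geq I(W;1/2)$, so it suffices to show $I(W;1/2) \geq 1-Z(W)$ in bits. With $\alpha=1/2$ the mixture density satisfies $r_n=(p_n+q_n)/2$ and $\frac{\alpha p_n}{r_n}=\frac{p_n}{p_n+q_n}$, so (\ref{eq:altform}) specializes to
\begin{align*}
I(W;1/2) = 1 - \sum_{n} \frac{p_n+q_n}{2}\,\mathcal{H}_b\!\left(\frac{p_n}{p_n+q_n}\right).
\end{align*}
The $\mathcal{H}_b(1/2)=1$ in the first term accounts for the $\mathcal{H}_b(\alpha)$ contribution (in bits).

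Next I would apply Theorem~\ref{thm:HUpperBound} with $p=p_n/(p_n+q_n)$ to each summand, which gives
\begin{align*}
\mathcal{H}_b\!\left(\frac{p_n}{p_n+q_n}\right) \leq \mathcal{Z}_b\!\left(\frac{p_n}{p_n+q_n}\right) = \frac{2\sqrt{p_n q_n}}{p_n+q_n}.
\end{align*}
Multiplying by $(p_n+q_n)/2$ and summing over $n$ produces $\sum_n \sqrt{p_n q_n}=Z(W)$, so the sum in the displayed expression above is at most $Z(W)$, and therefore $I(W;1/2) \geq 1 - Z(W)$. Combined with $C(W)\geq I(W;1/2)$, this yields (\ref{ineq:LowBd}).

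There is no real obstacle here: the entire work is packed into Theorem~\ref{thm:HUpperBound}. The one subtlety worth flagging is the unit convention, since Theorem~\ref{thm:HUpperBound} is stated in bits while (\ref{eq:BinEnt}) is in nats, so I would apply the theorem to the bit-valued binary entropy so that the bound $1-Z(W)$ comes out in bits as claimed. The use of the uniform input $\alpha=1/2$ (rather than the capacity-achieving $\alpha$) is what makes the pointwise application of the equivalent form (\ref{eq:EquivStatement}) clean, and this is precisely where asymmetry is bypassed without needing any subchannel decomposition.
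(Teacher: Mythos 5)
Your proposal is correct and follows essentially the same route as the paper: bound $C(W)$ by $I(W;1/2)$, apply Theorem~\ref{thm:HUpperBound} termwise inside the expectation in (\ref{eq:altform}), and recognize $\sum_n\sqrt{p_n q_n}=Z(W)$. The only cosmetic difference is that the paper carries out the termwise bound for general $\alpha$ (obtaining (\ref{eq:lastline})) before specializing to $\alpha=1/2$, whereas you specialize first; the substance is identical.
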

\begin{proof}[Proof of Corollary \ref{corr:LowerBound}] By definition
\begin{align}
I(W;\alpha)&=\mathcal{H}_b(\alpha)-\mathrm{E}_{\mathbf{R}}\left[\mathcal{H}_b\left(\frac{\alpha
 p_n}{r_n}\right)\right]\nonumber\\
 &\geq \mathcal{H}_b(\alpha)-(\log 2)\mathrm{E}_{\mathbf{R}}\left[
 \mathcal{Z}_b\left(\frac{\alpha p_n}{r_n}\right)\right]\nonumber\\
 &=\mathcal{H}_b({\alpha})-(\log 2)\sum_{n=1}^N \sqrt{4\alpha (1-\alpha) p_n 
 q_n}\nonumber\\
 &=\mathcal{H}_b({\alpha})-(\log 2)\sqrt{4\alpha(1-\alpha)} 
 Z(W).{\label{eq:lastline}}
\end{align}
By definition, $C(W)=\max_{\alpha} I(W;\alpha)\geq I(W;1/2)$. Therefore, 
\begin{align}
C(W)\geq \mathcal{H}_b(1/2)-(\log 2) Z(W),
\end{align}
which is equivalent to (\ref{ineq:LowBd}) when converted to bits from nats.
\end{proof}

The upper bound is more involved and derived in the next section.
%In the proof, we judiciously picked $\alpha=1/2$ so as to obtain the bound. A 
%plausible question is if there is a better candidate. The answer turns out to 
%be negative. To begin with, due to the symmetry, we consider 
%$\alpha\in[0,1/2]$. The first order derivative of  the right hand side of 
%%%(\ref{eq:lastline}) is given by $\

\section{Generalized Upper Bound}{\label{sec:UpBound}}

In this section, we derive an improved upper bound on the capacity of 
an arbitrary binary input memoryless channel. We first provide a lemma that is 
central to our arguments.

\begin{lemma}{\label{lem:Convexity}} The binary entropy is a {\em{convex}} 
function of the associated Bhattacharyya parameter. That is, there exists a 
convex bijection $F:[0,1]\to[0,\log2]$ such that 
$\mathcal{H}_b(p)=F\left(\mathcal{Z}_b(p)\right)$ for all $p\in[0,1]$ and it is 
defined as
\begin{align}{\label{eq:Relation}}
F(x)=\mathcal{H}_b\left(\frac{1-\sqrt{1-x^2}}{2}\right).
\end{align}
\end{lemma}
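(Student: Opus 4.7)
The plan is to verify the lemma in three stages: first, confirm that the displayed formula defines a valid $F$ with $\HH{p} = F(\mathcal{Z}_b(p))$ for every $p \in [0,1]$; second, check the bijection claim $F:[0,1]\to[0,\log 2]$; third, and most importantly, prove convexity of $F$ through a direct second-derivative computation.

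For the first step, I would invert $\mathcal{Z}_b(p) = x$ explicitly. Squaring leads to a quadratic whose two roots are $p_{\pm} = \tfrac{1}{2}(1 \pm \sqrt{1-x^2})$, so the branch with $p \in [0,1/2]$ is precisely $p(x) = \tfrac{1}{2}(1-\sqrt{1-x^2})$. The symmetries $\HH{p} = \HH{1-p}$ and $\mathcal{Z}_b(p) = \mathcal{Z}_b(1-p)$ then extend the identity $\HH{p} = F(\mathcal{Z}_b(p))$ from $[0,1/2]$ to the full interval $[0,1]$. For the second step, strict monotonicity of the composition $F = \HH{\cdot} \circ p$ follows because $p(x)$ is strictly increasing on $[0,1]$ with image $[0,1/2]$ and $\HH{\cdot}$ is strictly increasing on $[0,1/2]$; combined with the boundary values $F(0)=0$ and $F(1) = \log 2$, this yields the claimed bijection.

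The heart of the lemma is convexity. I would simplify calculations by introducing the auxiliary variable $u \triangleq \sqrt{1-x^2}$, so that $p = (1-u)/2$, $1-p = (1+u)/2$, and
\begin{align*}
F(x) = \log 2 - \tfrac{1-u}{2}\log(1-u) - \tfrac{1+u}{2}\log(1+u).
\end{align*}
A short computation gives $dF/du = \tfrac{1}{2}\log\tfrac{1-u}{1+u}$, and combining with $du/dx = -x/u$ produces the clean form $F'(x) = \tfrac{x}{2u}\log\tfrac{1+u}{1-u}$. Differentiating once more and using the identity $x^2 + u^2 = 1$ to eliminate cross terms, I expect $F''(x)$ to collapse to $\bigl(2u^3\bigr)^{-1}\bigl[\log\tfrac{1+u}{1-u} - 2u\bigr]$.

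Convexity thus reduces to the elementary inequality $\log\tfrac{1+u}{1-u} \geq 2u$ on $[0,1)$, which is the single genuine obstacle of the proof. This is easily settled by observing that both sides vanish at $u=0$ and that the derivative of the left-hand side, $2/(1-u^2)$, dominates the constant derivative $2$ of the right-hand side on the open interval. Putting the pieces together gives $F''(x) \geq 0$, completing the lemma.
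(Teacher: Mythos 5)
Your proposal is correct and takes essentially the same route as the paper: a direct second-derivative computation that reduces convexity to an elementary logarithmic inequality, settled by comparing derivatives after noting equality at the endpoint. In fact your substitution $u=\sqrt{1-x^2}$ is exactly the paper's parametric calculation in the variable $u=1-2p$; your reduced form $F''(x)=\frac{1}{2u^3}\left[\log\frac{1+u}{1-u}-2u\right]$ matches the paper's $\frac{\log\frac{1-p}{p}+4p-2}{2(1-2p)^3}$, and your key inequality $\log\frac{1+u}{1-u}\geq 2u$ is the paper's numerator positivity claim under that change of variables.
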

\begin{proof}[Proof of Lemma~\ref{lem:Convexity}] We first show the existence 
of such function, and then we prove its convexity. In order to avoid notational 
confusion, we define 
\begin{align}
f(p)&\triangleq -p\log p-(1-p)\log(1-p)\\
g(p)&\triangleq\sqrt{4p(1-p)}.
\end{align}
It is clear that $\mathcal{H}_b(p)=f(p)$ and $\mathcal{Z}_b(p)=g(p)$. Now, 
since both the binary entropy and the binary Bhattacharyya parameters are 
symmetric around $p=1/2$, it suffices to prove our claims for the interval 
$p\in[0,1/2]$. Therefore, solving $p$ in terms of $\mathcal{Z}_b(p)$, we get
\begin{align}
p=\frac{1-\sqrt{1-\mathcal{Z}_b(x)^2}}{2},
\end{align}
which is simply the explicit form of $p=g ^{-1}(\mathcal{Z}_b(p))$.
Finally, substituting $p$ in $\mathcal{H}_b(p)$ yields
\begin{align}
\mathcal{H}_b(p)=f\left(\frac{1-\sqrt{1-\mathcal{Z}_b^2(p)}}{2}\right),
\end{align}
In other words, $\mathcal{H}_b(p)=F(\mathcal{Z}_b(p))$ where
\begin{align}
F(x)=\mathcal{H}_b\left(\frac{1-\sqrt{1-x^2}}{2}\right).
\end{align}
The bijective nature of $F$ is due to the fact that both $f$ and $g$ are both 
monotone increasing bijections over the interval of interest which implies the 
bijectivity of the function $(f\o g^{-1})$. This completes the proof of the 
first part.

As for the convex nature of $F$, we prefer to derive the property using the 
parametric form rather than via $F$ directly. Essentially, we need to show 
\begin{align}
\frac{\mathrm{d}^2\mathcal{H}_b(p)}{\mathrm{d}\mathcal{Z}_b(p)^2}>0,
\end{align}
for $p\in[0,1/2]$. Applying the chain rule twice yields
\begin{align}
\frac{\mathrm{d}^2\mathcal{H}_b(p)}{\mathrm{d}\mathcal{Z}_b^2(p)}
=\frac{f^{\prime\prime}(p)g^\prime(p)-f^\prime(p)g^{\prime\prime}(p)}{(g^{\prime}(p))^3},
\end{align}
where
\begin{align}
f^\prime(p)&=\log\frac{1-p}{p}\\
f^{\prime\prime}(p)&=\frac{-1}{p(1-p)}\\
g^\prime(p)&=\frac{1-2p}{\sqrt{p(1-p)}}\\
g^{\prime\prime}(p)&=\frac{-1}{2(p(1-p))^{3/2}}.
\end{align}
After simple algebraic manipulations, we get
\begin{align}
\frac{\mathrm{d}^2\mathcal{H}_b(p)}{\mathrm{d}\mathcal{Z}_b^2(p)}=\frac{\log\left(\frac{1-p}{p}\right)+4p-2}{2(1-2p)^3}.
\end{align}
Since the denominator is always positive for $p\in(0,1/2)$, we show the 
numerator, $t(p)\triangleq \log\left(\frac{1-p}{p}\right)+4p-2$, is positive as 
well. Since the first order derivative is 
\mbox{$t^\prime(p)=4-\frac{1}{p(1-p)}=\frac{-(2p-1)^2}{p(1-p)}$} is negative 
over the 
interval $(0,1/2)$, we deduce that the function  $t(p)$ is monotone decreasing, 
and hence $t(p)\geq t(1/2)=0$. All in all, we have
\begin{align}
\frac{\mathrm{d}^2\mathcal{H}_b(p)}{\mathrm{d}\mathcal{Z}_b^2(p)}=\frac{t(p)}{2(1-2p)^3}>0,
\end{align}
as claimed.
\end{proof}

We now provide the upper bound on the capacity for the binary input 
memoryless channels.

\begin{theorem}{\label{thm:UpperBound}} The  capacity of any 
memoryless binary input discrete channel has the following upper bound
\begin{align}
C(W)\leq 1-\mathcal{H}_b\left(\frac{1-\sqrt{1-Z^2(W)}}{2}\right)\quad 
[\mathrm{bits}].
\end{align}
\end{theorem}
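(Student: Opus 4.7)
The plan is to combine Lemma~\ref{lem:Convexity} with Jensen's inequality to reduce the expectation term in $I(W;\alpha)$ to a function of a single scalar, and then to handle the freedom in the input prior $\alpha$ by a monotonicity argument rather than by direct optimization over $\alpha$.

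First I would start from the explicit form (\ref{eq:altform}) and invoke Lemma~\ref{lem:Convexity} to rewrite every occurrence of the binary entropy as $F$ composed with the binary Bhattacharyya parameter, i.e.\ $\mathcal{H}_b(\alpha p_n/r_n)=F(\mathcal{Z}_b(\alpha p_n/r_n))$. Because $F$ is convex, Jensen's inequality applied to the expectation under $\mathbf{R}$ gives
\begin{align*}
\mathrm{E}_{\mathbf{R}}\!\left[\mathcal{H}_b\!\left(\tfrac{\alpha p_n}{r_n}\right)\right]\geq F\!\left(\mathrm{E}_{\mathbf{R}}\!\left[\mathcal{Z}_b\!\left(\tfrac{\alpha p_n}{r_n}\right)\right]\right),
\end{align*}
which goes in the right direction since this term enters $I(W;\alpha)$ with a minus sign. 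Using $1-\alpha p_n/r_n=(1-\alpha)q_n/r_n$ one obtains $\mathcal{Z}_b(\alpha p_n/r_n)=(2/r_n)\sqrt{\alpha(1-\alpha)p_n q_n}$, so the inner expectation telescopes to $2\sqrt{\alpha(1-\alpha)}\,Z(W)$. Combined with $\mathcal{H}_b(\alpha)=F(2\sqrt{\alpha(1-\alpha)})$, this yields the intermediate bound $I(W;\alpha)\leq F(s)-F(s\,Z(W))$ with $s\triangleq 2\sqrt{\alpha(1-\alpha)}\in[0,1]$.

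Next I would maximize this bound over $\alpha$, equivalently over $s\in[0,1]$, and show that the maximum is attained at $s=1$, i.e.\ at the symmetric input $\alpha=1/2$. Writing $z=Z(W)$ and $\phi(s)=F(s)-F(sz)$, monotonicity of $\phi$ reduces to $\phi'(s)=F'(s)-zF'(sz)\geq 0$, which uses two facts from Lemma~\ref{lem:Convexity}: convexity gives $F'(sz)\leq F'(s)$ since $sz\leq s$, and the fact that $F$ is a bijection $[0,1]\to[0,\log 2]$ forces it to be increasing, so $F'(sz)\geq 0$ and thus $zF'(sz)\leq F'(sz)\leq F'(s)$. Evaluating at $s=1$ then produces $C(W)\leq F(1)-F(Z(W))=\log 2-\mathcal{H}_b((1-\sqrt{1-Z^2(W)})/2)$ in nats, which is the claimed bound once converted to bits.

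The hard part is precisely this last step. In the symmetric setting of \cite{fabregas} one simply evaluates at $\alpha=1/2$, but here the capacity-achieving prior can be arbitrary, so the pointwise bound must be shown largest at $\alpha=1/2$. Both properties of $F$ from Lemma~\ref{lem:Convexity}---its convexity and its monotonicity as an increasing bijection---are invoked here, and neither alone suffices; the Jensen step and the telescoping computation of the expectation are routine by comparison.
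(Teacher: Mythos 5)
Your proposal is correct, and its first half coincides with the paper's proof: starting from (\ref{eq:altform}), writing $\mathcal{H}_b=F\circ\mathcal{Z}_b$ via Lemma~\ref{lem:Convexity}, applying Jensen's inequality with the convexity of $F$, and collapsing the expectation to $\sqrt{4\alpha(1-\alpha)}\,Z(W)$ are exactly the steps taken there. Where you genuinely depart from the paper is the optimization over the prior. The paper keeps the bound in the form $f(\alpha;\beta)=\mathcal{H}_b(\alpha)-\mathcal{H}_b\bigl(\tfrac{1-\sqrt{1-4\alpha(1-\alpha)\beta}}{2}\bigr)$ and establishes $\partial f/\partial\alpha\geq 0$ on $[0,1/2]$ by an explicit calculus chain: it computes $\partial f/\partial\alpha=g(\alpha;1)-g(\alpha;\beta)$, substitutes $t=\sqrt{1-4\alpha(1-\alpha)\beta}$, and runs a further derivative test on $h(t)=\log\bigl(\tfrac{1+t}{1-t}\bigr)-\tfrac{2t}{1+t^2}$. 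You instead apply Lemma~\ref{lem:Convexity} a second time, to the term $\mathcal{H}_b(\alpha)=F\bigl(2\sqrt{\alpha(1-\alpha)}\bigr)$, so the bound becomes $\phi(s)=F(s)-F(sz)$ with $s=\mathcal{Z}_b(\alpha)\in[0,1]$, and monotonicity follows in one line from $\phi'(s)=F'(s)-zF'(sz)\geq 0$, using only that $F'$ is nonnegative and non-decreasing (convexity plus the increasing bijection property). This is shorter and more conceptual: it makes visible that both terms of the bound are values of the same convex increasing function evaluated at $s$ and $sz\leq s$, whereas the paper's route is self-contained elementary calculus at the cost of two extra auxiliary functions. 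Two cosmetic points you should make explicit if you write this up: $Z(W)\leq 1$ (by Cauchy--Schwarz), so that $sz\in[0,1]$ and $F(z)$ is defined, and the derivative argument should be stated on the open interval $(0,1)$ (or with one-sided derivatives) and closed off by continuity, since $F'$ is most cleanly obtained parametrically as in the paper's proof of Lemma~\ref{lem:Convexity}.
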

\begin{proof}[Proof of Theorem~\ref{thm:UpperBound}]We substitute 
(\ref{eq:Relation}) in the alternative form as given 
in (\ref{eq:altform}) to get
\begin{align*}
I(W;\alpha)&=\mathcal{H}_b(\alpha)-\mathrm{E}_{\mathbf{R}}\left[\mathcal{H}_b\left(\frac{\alpha
 p_n}{r_n}\right)\right]\\
&=\mathcal{H}_b(\alpha)-\mathrm{E}_{\mathbf{R}}\left[F\left(\mathcal{Z}_b\left(\frac{\alpha
 p_n}{r_n}\right)\right)\right]\\
&\leq \mathcal{H}_b(\alpha)-
F\left(\mathrm{E}_{\mathbf{R}}\left[\mathcal{Z}_b\left(\frac{\alpha 
p_n}{r_n}\right)\right]\right)\\
&=\mathcal{H}_b(\alpha)-F\left(\sqrt{4\alpha(1-\alpha)}Z(W)\right)\\
&=f(\alpha;Z^2(W)),
\end{align*}
where 
\begin{align}
f(\alpha;\beta)\triangleq 
\mathcal{H}_b(\alpha)-\mathcal{H}_b\left(\frac{1-\sqrt{1-4\alpha(1-\alpha) 
\beta}}{2}\right)
\end{align}
We claim that given any $\beta\in[0,1]$, the function $f(\alpha;\beta)$ is 
maximized at $\alpha=1/2$, regardless of the value of the parameter $\beta$. To 
begin with, since 
\mbox{$f(\alpha;\beta)=f(1-\alpha;\beta)$,} for 
all $\alpha\in[0,1]$ and $\beta\in[0,1]$, it suffices to prove the claim for 
$\alpha\in[0,1/2]$. In other words, we show
\begin{align}
f(\alpha;\beta)\leq f(1/2;\beta),\quad \forall  \alpha\in[0,1/2]\ 
\mathrm{and}\  \forall
\beta\in[0,1]
\end{align}
We prove a stronger result which implies the above claim.
Specifically, we show $f(\alpha;\beta)$ is a monotone increasing function in 
$\alpha\in[0,1/2]$ for any fixed $\beta\in[0,1]$. To this end, we evaluate the 
first order derivative function of $f$, which is given by
\begin{align}{\label{eq:fDeriv}}
\frac{\partial f(\alpha;\beta)}{\partial \alpha}=g(\alpha;1)-g(\alpha;\beta),
\end{align}
where
\begin{align}
g(\alpha;\beta)\triangleq 
\frac{(1-2\alpha)\beta}{\sqrt{1-4\alpha(1-\alpha)\beta}}\log\left(\frac{1+\sqrt{1-4\alpha(1-\alpha)\beta}}{1-\sqrt{1-4\alpha(1-\alpha)\beta}}\right)
\end{align}
In order to prove the monotone increasing nature of $f(\alpha;\beta)$ over 
$\alpha$, we prove 
a sufficient property which states that that $g(\alpha;\beta)$ is a monotone 
increasing function over $\beta$, 
for any given $\alpha\in[0,1/2]$. Once again, we apply the first order 
derivative test to prove that claim. In order to keep the derivations 
tractable, we define
\begin{align}
t\triangleq \sqrt{1-4\alpha(1-\alpha)\beta},
\end{align}
which in turn gives $\beta=\frac{1-t^2}{4\alpha(1-\alpha)}$ with $t\in[0,1]$. 
Then, we 
alternatively have
\begin{align}
g(\alpha;\beta)=\frac{(1-2\alpha)}{4\alpha(1-\alpha)}\left(\frac{1-t^2}{t}\right)\log\left(\frac{1+t}{1-t}\right)
\end{align}
Using the chain rule and applying algebraic manipulations yield
\begin{align}
\frac{\partial g(\alpha;\beta)}{\partial \beta}&=\frac{\partial 
g(\alpha;\beta)}{\partial t}\frac{\partial t}{\partial \beta}\\
&=\frac{(1-2\alpha)(1+t^2)}{2t^3}\left[\log\left(\frac{1+t}{1-t}\right)-\frac{2t}{1+t^2}\right].
\end{align}
The first order derivative of the $h(t)\triangleq 
\log\left(\frac{1+t}{1-t}\right)-\frac{2t}{1+t^2}$ is given by 
$h^\prime(t)=\frac{8t^2}{(1-t^4)(1+t^2)}$ which is non-negative for all 
$t\in[0,1]$, Therefore, the function $h(t)$ is increasing over this interval 
which is equivalent to $h(t)\geq h(0)=0$. This proves that the partial 
derivative $g(\alpha;\beta)$ with respect to $\beta\in[0,1]$ is non-negative 
for any fixed $\alpha\in[0,1/2]$. In other words, $g(\alpha;1)\geq 
g(\alpha;\beta)$ over the admissible intervals defined for $\alpha$ and 
$\beta$. Therefore, by (\ref{eq:fDeriv}), we have $\frac{\partial 
f(\alpha;\beta)}{\partial \alpha}\geq 0$, which in turn implies 
$f(\alpha;\beta)\leq f(1/2;\beta)$. Finally, setting $\beta=Z^2(W)$ and 
converting the quantities from nats to bits, conclude the proof.
\end{proof}

\section{Conclusion}
In this paper, we  derive  bounds on the capacity of binary-input 
memoryless channels {\em{without}} the {symmetry} assumption. As it turns out, 
the bounds derived under the symmetric channel assumption in \cite{fabregas} 
apply to the general case while their methods are  applicable only to the
symmetric case.

\nocite{*}


\begin{thebibliography}{1}

\bibitem{fabregas}            A. G. F\`{a}bregas, I. Land and A. Martinez
                                                       ``Extremes of Error 
                                                       Exponents,''
                                                        {\em  IEEE Trans. on 
                                                        Info. Theory},
                                                        vol. 59, no. 4, pp. 
                                                        2201-2207, April 2013.

\bibitem{arikan}            E. Arikan,
                           ``Channel Polarization: A Method for Constructing
                           Capacity-Achieving Codes for Symmetric
                           Binary-Input Memoryless Channels,''
                            {\em  IEEE Trans. on Info. Theory},
                            vol. 55, no. 7, pp. 3051-3073, July 2009.


\end{thebibliography}
\end{document}